\newtheorem{theorem}{Theorem}[section]
\newtheorem{proposition}[theorem]{Proposition}
\title{\LARGE \bf
When would online platforms pay data dividends?
}
\author{Sukanya Kudva and Anil Aswani
\thanks{*This material is based upon work supported by the National Science Foundation under Grant CMMI-1847666.}
\thanks{S Kudva and A Aswani are with the Department of Industrial Engineering
and Operations Research, University of California, Berkeley, CA 94720, USA
        {\tt\small sukanya\_kudva@berkeley.edu, aaswani@berkeley.edu}}%
}
\begin{document}
\maketitle
\thispagestyle{empty}
\pagestyle{empty}

\begin{abstract}
Online platforms, including social media and search platforms, have routinely used their users' data for targeted ads, to improve their services, and to sell to third-party buyers. But an increasing awareness of the importance of users' data privacy has led to new laws that regulate data-sharing by platforms. Further, there have been political discussions on introducing data dividends, that is paying users for their data. Three interesting questions are then: When would these online platforms be incentivized to pay data dividends? How does their decision depend on whether users value their privacy more than the platform's free services? And should platforms invest in protecting users' data? This paper considers various factors affecting the users' and platform's decisions through utility functions. We construct a principal-agent model using a Stackelberg game to calculate their optimal decisions and qualitatively discuss the implications. Our results could inform a policymaker trying to understand the consequences of mandating data dividends. 
\end{abstract}

\section{INTRODUCTION}
The revenue models of many online platforms depend on collecting, analyzing, and selling users' data. A free-service and advertising-based revenue model can cause conflicts in the users' and platform's interests. Users may be concerned about their privacy and possible misuse of data, while platforms want to maximize their profits. Further, users' perception of a platform's ethics and their willingness to participate can be affected by the platform's revenue model, pricing decisions, and privacy practices\cite{1,2,3}. 
\subsection{Cybersecurity on online platforms}
After the onset of the Internet of Things (IoT), there has been an increase in the variety, speed, and volume of users' data collected. Information from multiple sources including devices, sensors, and social networks is being used by platforms to assist users and collect data \cite{3'}. 

Though users have become more aware of the rampant collection of their data, they often do not know about the proliferation of IoT devices in their everyday lives. For instance, in August 2022 the Australian federal court convicted a major search platform for collecting users' location data without their knowledge \cite{5'}. When users give consent and permissions to apps and platforms, they tend to underestimate the implications of it \cite{6'}. Reforms to protect users' privacy are very much needed across the world \cite{ja}. Users should be able to control, delete and transfer their data across different platforms and service providers. They should be asked for explicit consent every time a platform wants to use their data for a new purpose \cite{3'}. 
\subsection{Privacy and data dividends}
With growing user concerns, consumer privacy legislation has become an important topic for public discussion, and multiple new data privacy laws have been introduced \cite{4}. In Europe, the General Data Protection Regulation (GDPR) was introduced in 2016 to give consumers more control over their data \cite{7'}. In 2019, legislators in California announced their intent to introduce data dividends, which is a model in which platforms would pay users in exchange for use of users' data \cite{5}. The same year, Oregon legislators introduced a bill, called the Health Information Property Act, to compensate consumers for monetizing their health data \cite{8'}. 

Implementing data dividends comes with its own challenges as companies holding the data are far more powerful than individual users. Further, there is a huge information asymmetry and only companies know the actual value of the users' data. The critics of data dividends argue that selling data would make it a commodity and be counter-productive in protecting users' privacy. They also feel that vulnerable groups -- such as people of color and the poor -- who are currently discriminated against should not be incentivized to pour more data into the system and further reduce their privacy \cite{9',10'}. On the other hand, the proponents of data dividends argue that today's technology economy is hugely driven by monetizing users' data, and paying users a share of these benefits is only fair.  

Recent studies have explored different ways of pricing data dividends for each user based on the value of their individual data \cite{6,7}. Using the idea of Shapley and Owen values, they calculated the contribution made by each user to the platform's profits. Some scholars have also proposed different ways of charging data dividends and how they could be used for the greater public good \cite{11'}. For our work, we ask different questions: Should platforms pay data dividends at all, and why? In this paper, we do our analysis with homogeneous users but our methods can be extended to analyze heterogeneous users too.

\subsection{Contributions and outline}

Our paper is organized as follows: Sect. \ref{sect2} outlines our utility functions for an online platform and its users and our principal-agent model, Sect. \ref{sect3} analytically solves the principal-agent model in order to derive their optimal choices and Sect. \ref{sect4} discusses insights from our model.

Our paper comes in the context of rising debates on data dividends. We try to understand when and how much online platforms would pay as data dividends. We consider users' privacy concerns as an important factor, for which a platform can invest in data protection. The platform can also pay users to incentivize taking risks and sharing their data. Our main contributions are to introduce a principal-agent model using a Stackelberg game to capture the platform-users dynamics and then use this to gain a better understanding of when the platform would pay users with data dividends.

\section{OUR MODEL}\label{sect2}
An online platform provides its users with free services and a data dividend in exchange for their data. The users are allowed to choose between two levels -- high or low -- of data sharing. For each level, the platform provides a different data dividend and set of services. As the platform collects users' data, it faces the risk of a possible data breach. If a data breach occurs, the platform loses its reputation and faces possible legal and financial complications, and the users are harmed by the misuse of their data. Hence, the platforms consider investing in protecting their users' data. This could include the costs of building better technological infrastructure and signing insurance contracts.

\subsection{Platform's utility}
The platform has a fixed cost of $S$ to maintain and provide its services. It invests $I$ in users' data protection, reducing the probability of a data breach to $B(I)$. Here, $B$ is assumed to be a twice differentiable function such that $B(I) > 0$ with $\lim_{I \rightarrow \infty} B(I) = 0$, $B'(I)< 0$ and $B''(I)> 0 \; \forall I$. In case of a data breach, the platform has a loss of $F$ from legal cases and a lost reputation. When having access to $k$ users' data, the platform makes a revenue of $U(k,b)$, where $b$ of the total $k$ users chose the low level of data sharing. The revenue may be due to selling the data to a third-party, selling advertisements to display to users or other sources. The platform pays a share of this revenue to users as data dividends, which are priced at two scales -- $p_0$ and $p_1$ -- for the low and high levels of data sharing respectively.

Considering these costs, revenue, and risks, the platform has a total expected utility of :
\begin{align} \label{eq1}
  U(k,b) \-- B(I)F \-- I \-- S \-- p_0 b \-- p_1(k\--b).   
\end{align}

\subsection{User's utility}
We consider $k$ homogeneous users who have the same behavior and parameters. Each user values their personal data at $V$ and faces an additional personal loss of $L$ on a data breach. They also benefit from the platform's free services, which amount to a value of $W$. When a user chooses the low level of data sharing, these benefits and losses scale down by a factor $\alpha \in (0,1)$. The platform pays users with data dividends at two scales of pay and thereby encourages a particular level of data sharing. 

Let $c_i$ be user $i$'s decision variable so that $c_i = 0$ and $1$ for the low and high levels of data sharing respectively. Then a user $i$'s total expected utility is:
\begin{align} \label{eq2}
    c_i (p_1 \-- \mathcal{V}) + (1 \-- c_i)( p_0 - \alpha \mathcal{V}),
\end{align}
where $\mathcal{V} = \Bar{V} + B(I) L$ and $\Bar{V} =V \-- W $.

\subsection{Principal-agent formulation}
We construct a Stackelberg game \cite{12',13',14'}, in which players act sequentially with follower(s) acting after a leader. In our model, the platform first prices data dividends for different levels of data-sharing, and decides on investment for users' data protection. Given this information, the users decide how much data to share. 

The optimal, equilibrium choices of the platform and the users are their Stackelberg strategies. We capture this using an optimization problem, which maximizes the platform's utility when the users are maximizing their utilities\cite{8}. It is formulated as follows:
\begin{align} \label{eq3}
\begin{split}
    \max &{\;\; U(k,b) \-- B(I)F \-- I \-- S \-- p_0 b \-- p_1(k\--b)} \\
    s.t &\; \;c_i^* = \text{arg}\max_{\;c_i \in \{0,1\}} c_i( p_1 \--\mathcal{V}) + (1\--c_i)(p_0 \-- \alpha \mathcal{V}) \\
    & \hspace{150pt} \forall i =\{1, \cdots, k\}\\
    & \;\;c_i^*( p_1 \--\mathcal{V}) + (1\--c_i^*)(p_0 \-- \alpha \mathcal{V}) \geq 0 \\
    & \hspace{150pt} \forall i=\{1, \cdots, k\}\\
    & \;\;I,\: p_0 \:, p_1 \geq 0 . 
\end{split} 
\end{align} 
The second constraint in (\ref{eq3}) ensures that the users do not have a net loss from using the platform, without which they would leave the platform.
\section{Optimal choices } \label{sect3}
Since the users are homogeneous, they make similar choices: either $c_i^* = 1$ or $c_i^* = 0$ for every user $i$. If users find both levels of data sharing to be utility-maximizing, then we assume that they all choose the level that most benefits the platform. We solve these two cases of $c_i^*$ separately. The optimal solution is the best of the two cases and can vary with the numerical values of the parameters of the model.
\subsection{Case 1: $c_i^* = 1 \;\; \forall i = \{1, \cdots , k\}$}
To ensure users choose high level of data sharing, the platform must price the data dividends such that $p_1 \-- \mathcal{V} \geq p_0 - \alpha \mathcal{V}$. Note that $b= 0$ in this case. The optimization problem then reduces to: 
\begin{align} \label{eq4}
\begin{split}
    \max & {\; U(k,0) \-- B(I)F \-- I \-- p_1 k \--S} \\
    s.t \; \; & p_1 \-- \mathcal{V} \geq p_0 \-- \alpha \mathcal{V}\\
    & p_1 \-- \mathcal{V} \geq 0 \\
    & I,\: p_0 \:, p_1 \geq 0.
\end{split}
\end{align}
Since $p_0 \geq 0$, $p_0^* = 0$ is optimal for the problem. Given this, one can conclude $p_1^* = \mathcal{V}$ when $\mathcal{V} \geq 0$ and $p_1^* = 0$ when $\mathcal{V} \leq 0$. Also, $U(k,0) -S$ can be treated as a constant. These observations further reduce (\ref{eq4}) to two sub-cases with optimization problems in a single variable $I$. 
\subsubsection{Sub-case 1 - If $\mathcal{V} \geq 0$ then $p_1^* = \mathcal{V}$} \label{type1}

Set $p_1^* = \mathcal{V} = \Bar{V} + B(I) L$ and add an additional constraint $\mathcal{V} \geq 0$ in (\ref{eq4}).
\begin{align} \label{eq6}
\begin{split}
    \min & {\; B(I) F_1 + I + \Bar{V} k} \\
    s.t \; \; & \Bar{V} + B(I) L \geq 0\\
    & I \geq 0,
\end{split}
\end{align}
where $F_1 = F + Lk$.

Let $I_1 = B'^{-1}(\frac{\--1}{F_1})$ and $I_2 = B^{-1} (\frac{\--\Bar{V}}{L})$.
\begin{proposition} \label{prop1}
    When $\Bar{V} \geq 0$, $I^* = I_1$ if $I_1$ exists. Else, $I^* =0$. 
\end{proposition}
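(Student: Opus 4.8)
The plan is to reduce the constrained program (\ref{eq6}) to an unconstrained one-dimensional convex minimization and then read off the optimizer from a first-order condition. First I would observe that the term $\bar{V}k$ in the objective does not depend on $I$, so it suffices to minimize $h(I) := B(I)F_1 + I$ over the feasible set. Under the hypothesis $\bar{V} \geq 0$ the constraint $\bar{V} + B(I)L \geq 0$ is automatically satisfied, because $B(I) > 0$ and $L \geq 0$ give $\bar{V} + B(I)L \geq \bar{V} \geq 0$; hence (\ref{eq6}) collapses to $\min_{I \geq 0} h(I)$. (In particular the privacy constraint is inactive throughout this sub-case, which is why $I_2$ plays no role in the statement.)

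Next I would exploit convexity. Since $F_1 = F + Lk > 0$ and $B'' > 0$, we have $h''(I) = B''(I)F_1 > 0$, so $h$ is strictly convex on $[0,\infty)$; moreover $h(I) \to \infty$ as $I \to \infty$ because $B(I) \to 0$, so a minimizer exists. The first-order condition $h'(I) = B'(I)F_1 + 1 = 0$ is equivalent to $B'(I) = -1/F_1$. Because $B'$ is continuous, strictly increasing (as $B'' > 0$), and negative, it maps $[0,\infty)$ bijectively onto $[B'(0),0)$; therefore this equation has a solution — necessarily unique and nonnegative — exactly when $-1/F_1 \geq B'(0)$, and that solution is $I_1 = B'^{-1}(-1/F_1)$. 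This is precisely what "$I_1$ exists" means here.

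Finally I would split into the two cases. If $I_1$ exists, strict convexity makes it the unique global minimizer of $h$ over $\mathbb{R}$, and since $I_1 \geq 0$ it is feasible, so $I^* = I_1$. If $I_1$ does not exist, then $B'(I) \geq B'(0) > -1/F_1$ for every $I \geq 0$, whence $h'(I) = B'(I)F_1 + 1 > 0$ on $[0,\infty)$; thus $h$ is strictly increasing and its constrained minimum is attained at the boundary, giving $I^* = 0$.

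The individual steps are routine; the one place that needs care is pinning down what "$I_1$ exists" should mean and checking that, when it does exist, $I_1$ is automatically nonnegative — both follow from the monotonicity of $B'$ together with the convention that $B'^{-1}$ is taken over the feasible range $[0,\infty)$. A secondary point worth stating explicitly is the sign of $F_1$ (positive as long as $F,L,k\geq 0$ with $F$ or $Lk$ positive), since the whole argument rests on $h$ being convex.
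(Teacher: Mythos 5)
Your proof is correct and follows essentially the same route as the paper's: note that $\bar{V}\geq 0$ makes the constraint $\bar{V}+B(I)L\geq 0$ vacuous, then use convexity of $B(I)F_1+I$ to identify $I_1$ as the unconstrained (and hence constrained) minimizer when it exists, and fall back to $I^*=0$ when the objective is increasing on $[0,\infty)$. Your version simply makes explicit the convexity and first-order-condition reasoning that the paper leaves implicit, including the useful clarification that ``$I_1$ exists'' means $-1/F_1\geq B'(0)$.
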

\begin{proof}
   $B(I)L \geq \--\Bar{V}$ holds $\forall I$ because $\Bar{V} \geq 0$. When $I_1$ exists, it's the optimal solution to the unconstrained problem. Since it also satisfies $I_1 \geq 0$, it's the optimal solution for the constrained problem too. When $I_1$ doesn't exist, the objective in (\ref{eq6}) is an increasing function for $I \geq 0$. So the smallest feasible value of $I$, i.e $0$, is optimal. 
\end{proof}

\begin{proposition}\label{prop2}
    When $\Bar{V} \leq 0$,
    \begin{enumerate}
        \item If $I_1$ exists, $I^* = I_1$ when $B(I_1)L \geq \--\Bar{V}$. Else when $B(I_1)L < \--\Bar{V}$, $I^* = I_2$ if $I_2$ exists and no solution otherwise.
        \item If $I_1$ doesn't exist, $I^* =0 $ when $B(0) L \geq \-- \Bar{V}$ and no solution otherwise.  
    \end{enumerate}
\end{proposition}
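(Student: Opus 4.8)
The plan is to exploit strict convexity of the objective together with monotonicity of $B$, so that (\ref{eq6}) becomes the problem of minimising a strictly convex function of one variable over an interval, whose solution is the unconstrained minimiser clamped to that interval.

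First I would record two structural facts. (i) The objective $g(I):=B(I)F_1+I$ has $g''(I)=B''(I)F_1>0$ (since $F_1=F+Lk>0$), so $g$ is strictly convex, $g'(I)=B'(I)F_1+1$ is strictly increasing, and $g'(I)\to 1>0$ as $I\to\infty$ because $B'(I)\to 0^-$. Hence, exactly as in Proposition \ref{prop1}, when $I_1$ exists it is the unique global minimiser of $g$ over $\{I\ge 0\}$; when $I_1$ does not exist we must have $g'(0)=B'(0)F_1+1>0$, and since $g'$ is increasing, $g$ is strictly increasing on all of $[0,\infty)$, so the unconstrained minimiser there is $I=0$. (ii) Since $\Bar{V}\le 0$ we have $-\Bar{V}/L\ge 0$, and because $B$ is strictly decreasing the constraint $\Bar{V}+B(I)L\ge 0$ is equivalent to $B(I)\ge -\Bar{V}/L$, which holds precisely for $I\in[0,I_2]$ when $I_2=B^{-1}(-\Bar{V}/L)$ exists and for no $I\ge 0$ otherwise. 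As the range of $B$ on $[0,\infty)$ is $(0,B(0)]$, $I_2$ exists iff $-\Bar{V}/L\le B(0)$, i.e. iff $B(0)L\ge -\Bar{V}$; if it does not exist the feasible set of (\ref{eq6}) is empty and there is no solution.

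Next I would combine the two. Dropping the irrelevant constant $\Bar{V}k$, (\ref{eq6}) asks for the minimiser of the strictly convex $g$ over the feasible interval $[0,I_2]$ (or over $\emptyset$). Split on whether $I_1$ exists. If $I_1$ exists, $g$ decreases on $[0,I_1]$ and increases on $[I_1,\infty)$: if $I_1\le I_2$ then $I_1$ is feasible and optimal, and by strict monotonicity of $B$, $I_1\le I_2\iff B(I_1)\ge B(I_2)=-\Bar{V}/L\iff B(I_1)L\ge -\Bar{V}$; if instead $B(I_1)L<-\Bar{V}$, then $I_1>I_2$, so $g$ is strictly decreasing on all of $[0,I_2]$ and the optimum is the right endpoint $I_2$, provided $I_2$ exists, and there is no solution otherwise. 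If $I_1$ does not exist, $g$ is strictly increasing on $[0,\infty)$, so the optimum over $[0,I_2]$ is the left endpoint $I=0$, provided the feasible set is nonempty, i.e. $B(0)L\ge -\Bar{V}$, and there is no solution otherwise. These three branches are exactly parts (1) and (2) of the statement.

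The only real subtlety — and the step I would be most careful with — is the feasibility bookkeeping: translating the comparison of $I_1$ and $I_2$ into the stated inequalities on $B(I_1)$, and observing that the existence of $I_1$ (a condition on the \emph{slope} of $B$) is logically independent of the existence of $I_2$ (a condition on the \emph{level} of $B$ relative to $B(0)$); in particular $B(I_1)L<-\Bar{V}$ does not force $B(0)L\ge -\Bar{V}$, which is why each branch must separately carry the ``no solution otherwise'' clause. The convex-optimisation part itself is routine once the one-variable reduction and the interval description $[0,I_2]$ of the feasible set are established.
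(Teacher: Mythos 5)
Your argument is correct and follows essentially the same route as the paper's proof: identify $I_1$ as the unconstrained minimiser (or note the objective is increasing when $I_1$ does not exist), describe the feasible set as the interval $[0,I_2]$, and clamp accordingly, with the comparison $I_1\le I_2$ translated into $B(I_1)L\ge -\Bar{V}$ via strict monotonicity of $B$. Your version is somewhat more explicit about the convexity bookkeeping (e.g.\ why $I_1$ failing to exist forces $g'(0)>0$), but the substance is identical to the paper's.
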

\begin{proof}
    When $I_1$ exists and $B(I_1)L \geq \--\Bar{V}$ holds, $I_1$, which is the unconstrained optimal solution, satisfies both constraints and is optimal. When $B(I_1)L < \--\Bar{V}$, $I \geq I_1$ doesn't satisfy the first constraint in (\ref{eq6}) and the objective is a decreasing function for $I < I_1$. If $I_2$ exists, it is the largest $I < I_1$ satisfying $\Bar{V} + B(I_2) L \geq 0$ and is optimal. If $I_2$, which satisfies $\Bar{V} + B(I_2)L =0$, doesn't exist then there is no solution as $B(I)$ is a strictly decreasing function and the constraint $ B(I) L \geq -\Bar{V}$ can never be satisfied.

    When $I_1$ doesn't exist, the objective is an increasing function for $I \geq 0$. If $B(0) L \geq \-- \Bar{V}$ then $I = 0 $ satisfies both constraints of the problem and is optimal. If $B(0)L < \-- \Bar{V}$ then there is no solution as $B(I)$ is strictly decreasing and the constraint $B(I) L \geq \--\Bar{V}$ can never hold. 
\end{proof}
\subsubsection{Sub-case 2 - If $\mathcal{V} \leq 0$ then $p_1^* = 0$} \label{type2}
Set $p_1 = 0$ and add an additional constraint $\mathcal{V} \leq 0$ in (\ref{eq4}).
\begin{align} \label{eq7}
\begin{split}
    \min & {\;B(I) F_0 + I } \\
    s.t \; \; & \Bar{V} + B(I) L \leq 0\\
    & I \geq 0
\end{split}
\end{align}
where $F_0 = F$.

Let $I_3 = B'^{-1}(\frac{\--1}{F_0})$ and as before, $I_2 = B^{-1} (\frac{\--\Bar{V}}{L})$.
\begin{proposition}\label{prop3}
    When $\Bar{V} > 0$, there is no solution.
\end{proposition}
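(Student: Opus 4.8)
The plan is to show that the feasible region of the single-variable program (\ref{eq7}) is empty whenever $\bar V > 0$, so that no minimizer can exist. The only constraint that can fail is $\bar V + B(I)L \leq 0$; the plan is to argue that its left-hand side is strictly positive for every admissible $I$.

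First I would recall the standing assumptions: $B(I) > 0$ for all $I$ (part of the hypotheses on $B$ in Sect. \ref{sect2}), and $L \geq 0$ since $L$ is defined as a personal loss incurred on a data breach. Combining these gives $B(I)L \geq 0$ for every $I \geq 0$. Adding $\bar V$ and using $\bar V > 0$ yields $\bar V + B(I)L \geq \bar V > 0$ for all $I \geq 0$, so the inequality $\bar V + B(I)L \leq 0$ is violated at every candidate point.

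I would then conclude that the feasible set $\{\,I \geq 0 : \bar V + B(I)L \leq 0\,\}$ is empty, hence (\ref{eq7}) is infeasible and has no solution. Since Sub-case 2 corresponds exactly to the regime $\mathcal V \leq 0$, i.e. $\bar V + B(I)L \leq 0$, this also records that when $\bar V > 0$ the high-data-sharing case forces $p_1^* = \mathcal V > 0$ and never the degenerate $p_1^* = 0$ branch.

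There is essentially no obstacle here: the argument is a one-line sign check, and the only point requiring care is making explicit the (otherwise implicit) assumption $L \geq 0$, together with the strict positivity $B(I) > 0$ that was assumed for $B$. If one wished to allow $L < 0$ the statement would need to be re-examined, but that is outside the model's conventions.
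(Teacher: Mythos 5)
Your argument is correct and is essentially identical to the paper's one-line proof: the constraint $\Bar{V} + B(I)L \leq 0$ in (\ref{eq7}) can never hold when $\Bar{V} > 0$ because $B(I) > 0$ and $L \geq 0$, so the feasible set is empty. Your added remark making the implicit assumption $L \geq 0$ explicit is a reasonable clarification but does not change the substance.
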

\begin{proof}
    $\--\Bar{V} \geq B(I) L$ can never hold for any $I$ as $\Bar{V} > 0$.
\end{proof}

\begin{proposition}\label{prop4}
    When $\Bar{V} \leq 0$, 
    \begin{enumerate}
        \item If $I_3$ exists, $I^* = I_3$ when $\-- \Bar{V} \geq B(I_3) L$. Else when $\-- \Bar{V} < B(I_3) L$, $I^* =I_2$.
        \item If $I_3$ doesn't exist, $I^* =0 $ when $\--\Bar{V} \geq B(0) L$ and $I^*=I_2$ otherwise.
    \end{enumerate}
\end{proposition}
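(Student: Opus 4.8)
The plan is to follow the template of the proof of Proposition \ref{prop2}, the only structural change being that the feasibility constraint has flipped sign: $\Bar{V} + B(I)L \leq 0$ is equivalent to $B(I) \leq -\Bar{V}/L$, and since $B$ is strictly decreasing this is a \emph{lower} bound $I \geq I_2$ on the investment rather than an upper bound. I would open by recording two facts. First, the objective $g(I) := B(I)F_0 + I$ is strictly convex on $[0,\infty)$ because $g''(I) = B''(I)F_0 > 0$, so $g'(I) = B'(I)F_0 + 1$ is strictly increasing; hence when $I_3 = B'^{-1}(-1/F_0)$ exists it is the unique global minimizer of $g$, and when $I_3$ does not exist $g'$ has no zero on $[0,\infty)$ and $g$ is strictly increasing there (exactly as argued in the proof of Proposition \ref{prop1}). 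Second, because $\Bar{V} \leq 0$ gives $-\Bar{V}/L \geq 0$ and $B$ is continuous, strictly decreasing, and positive, the feasible set $\{I \geq 0 : B(I) \leq -\Bar{V}/L\}$ is the half-line $[\max\{0, I_2\}, \infty)$ with $I_2 = B^{-1}(-\Bar{V}/L)$, and $I_2 \geq 0$ precisely when $B(0) \geq -\Bar{V}/L$, i.e.\ when $-\Bar{V} \leq B(0)L$.

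The proof then reduces to minimizing a strictly convex (or strictly increasing) function over this half-line, which splits into the four listed cases. If $I_3$ exists and $-\Bar{V} \geq B(I_3)L$, then $B(I_3) \leq -\Bar{V}/L$, so the unconstrained minimizer $I_3$ is itself feasible and therefore optimal: $I^* = I_3$. If $I_3$ exists but $-\Bar{V} < B(I_3)L$, then $B(I_3) > -\Bar{V}/L = B(I_2)$, so $I_3 < I_2$ by monotonicity of $B$; the feasible set is $[I_2, \infty)$, on which $g$ is strictly increasing (it lies to the right of the minimizer $I_3$), so the left endpoint $I^* = I_2$ is optimal. If $I_3$ does not exist, $g$ is strictly increasing on $[0,\infty)$, so the optimum is the smallest feasible point: this is $0$ when $0$ is feasible, that is when $B(0) \leq -\Bar{V}/L$, equivalently $-\Bar{V} \geq B(0)L$, and otherwise it is $I_2 > 0$. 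I would also note the two bookkeeping conventions inherited from Proposition \ref{prop2}: ``$I_3$ exists'' is read as ``$I_3$ exists and is nonnegative'' (if $B'^{-1}(-1/F_0) < 0$ then $g$ is increasing on $[0,\infty)$ and the $I_3$-does-not-exist analysis applies verbatim), and the branches returning $I^* = I_2$ implicitly assume $I_2$ exists, which holds whenever $\Bar{V} < 0$; the knife-edge $\Bar{V} = 0$ makes the constraint $B(I)L \leq 0$ infeasible, so strictly speaking there is no solution there.

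The step I expect to require the most care is the second recorded fact: correctly using the monotonicity of $B$ to turn the sign-flipped constraint into the lower bound $I \geq I_2$ and into the feasibility threshold ``$-\Bar{V}$ versus $B(0)L$'', and then matching each item's two sub-cases to whether the convex objective's minimizer $I_3$ lies inside the feasible half-line or to its left. Once those reductions are set up correctly, each case is a one-line consequence of elementary convex optimization on an interval.
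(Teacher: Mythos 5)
Your proposal is correct and follows essentially the same argument as the paper: check whether the unconstrained minimizer $I_3$ of the convex objective is feasible, and otherwise take the smallest feasible point $I_2$ (or $0$ when $I_3$ does not exist and $0$ is feasible). Your added observations — that the constraint is a lower bound $I \geq I_2$ on investment, and that the knife-edge $\Bar{V}=0$ actually renders the sub-case infeasible since $B(I)>0$ — are accurate refinements of details the paper leaves implicit.
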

\begin{proof}
    When $I_3$ exists and $\-- \Bar{V} \geq B(I_3) L$ holds, $I_3$, which is the unconstrained optimal solution, satisfies both constraints and is optimal. When $\-- \Bar{V} < B(I_3) L$, $I \leq I_3$ doesn't satisfy $\-- \Bar{V} \geq B(I) L$ as $B(I)$ is strictly decreasing. Since the objective is increasing for $I > I_3$, $I_2$ which is the smallest $I$ satisfying $\--\Bar{V} \geq B(I)L$ is optimal. Here, $I_2$ exists because $I_2 > I_3$ and $B(I)$ is strictly decreasing.

    When $I_3$ doesn't exist, the objective is increasing for $I \geq 0$ and hence the smallest feasible $I$ is optimal. If $\--\Bar{V} \geq B(0) L$ then $I = 0$ satisfies both the constraints and is optimal. If $\--\Bar{V} < B(0) \: L$ then $I = I_2$ is the smallest feasible $I$ with $\Bar{V} + B(I)L = 0$ and is optimal. 
\end{proof}
The results from the two sub-cases can be combined by finding the minimum objective value of the two sub-cases.

\subsubsection{Final results of Case 1} 
\begin{theorem}\label{thm1}
    For the case when all users choose the high level of data-sharing, when $\Bar{V} \leq 0$ the optimal choices of the platform are as follows:
    \begin{enumerate}
        \item If $I_1$ and $I_3$ exist, $(I^*= I_1,\: p_1^* = \Bar{V} + B(I_1)L,\: p_0^*= 0)$ when $L \geq \frac{\--\Bar{V}}{B(I_1)}$, $(I^*= I_2,\: p_1^* = 0,\: p_0^*= 0)$ when $\frac{\--\Bar{V}}{B(I_1)}\geq L \geq  \frac{\--\Bar{V}}{B(I_3)}$ and $(I^*= I_3,\: p_1^* = 0,\: p_0^*= 0)$ when $ \frac{\--\Bar{V}}{B(I_3)} \geq L$.
        \item If $I_1$ exists but not $I_3$, $(I^*= I_1,\: p_1^* = \Bar{V} + B(I_1)L,\: p_0^*= 0)$ when $L \geq \frac{\--\Bar{V}}{B(I_1)}$, $(I^*= I_2,\: p_1^* = 0,\: p_0^*= 0)$ when $\frac{\--\Bar{V}}{B(I_1)} \geq L \geq \frac{\--\Bar{V}}{B(0)}$ and $(I^*= 0,\: p_1^* = 0,\: p_0^*= 0)$ when $\frac{\--\Bar{V}}{B(0)} \geq L$.
        \item If both $I_1$ and $I_3$ don't exist, $(I^* = 0, \: p_1^* = \Bar{V} + B(0)L, \:p_0^* =0)$ when $L \geq \frac{\--\Bar{V}}{B(0)}$ and $(I^*= 0,\: p_1^* = 0,\: p_0^*= 0)$ when $\frac{\--\Bar{V}}{B(0)} \geq L$.
    \end{enumerate}
\end{theorem}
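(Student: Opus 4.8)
The plan is to collapse Case~1 into a single continuous one-variable minimization and read the optimizer off its shape. Recall from the discussion before Sub-case~1 that $p_0^*=0$, and that (for $\Bar{V}\le 0$, the regime in which both sub-problems are feasible) Case~1 is the better of the two programs \eqref{eq6} and \eqref{eq7}: Sub-case~1 minimizes $g_1(I):=B(I)F_1+I+\Bar{V}k$ over $\{I\ge 0:\mathcal{V}(I)\ge 0\}$ and Sub-case~2 minimizes $g_0(I):=B(I)F_0+I$ over $\{I\ge 0:\mathcal{V}(I)\le 0\}$, where $\mathcal{V}(I):=\Bar{V}+B(I)L$. Propositions~\ref{prop2} and \ref{prop4} already solve these two programs, so the Case~1 optimizer $I^*$ is whichever of the two yields the smaller objective; I will extract that comparison from two structural facts rather than by direct computation.

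The first fact fixes the case list. Since $L,k>0$ we have $F_1=F+Lk>F=F_0$, hence $-1/F_1\in(-1/F_0,0)$; as $B''>0$, the derivative $B'$ is strictly increasing with $B'(I)\uparrow 0$, so its range is an interval with right endpoint $0$, and therefore $-1/F_0$ lying in that range forces $-1/F_1$ to lie in it as well. Thus ``$I_3$ exists'' implies ``$I_1$ exists'', and then $I_1=B'^{-1}(-1/F_1)>B'^{-1}(-1/F_0)=I_3$ because $B'^{-1}$ is increasing --- which is exactly why only the three configurations in the statement can occur. The second and decisive fact is that $g_1(I)-g_0(I)=k\bigl(\Bar{V}+B(I)L\bigr)=k\,\mathcal{V}(I)$, which vanishes precisely at $I=I_2$ (where $\mathcal{V}(I_2)=0$). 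Hence the function $h$ formed by gluing the two objectives --- $h=g_1$ on the Sub-case~1 feasible set, $h=g_0$ on the Sub-case~2 feasible set --- is single-valued and continuous on $[0,\infty)$, and the Case~1 optimum $I^*$ is its global minimizer. Because $B$ is strictly decreasing and $L>0$, $\mathcal{V}$ is strictly decreasing, so the Sub-case~1 feasible set is $[0,I_2]$ and the Sub-case~2 feasible set is $[I_2,\infty)$ whenever $I_2$ exists and is nonnegative, while $[0,\infty)$ lies entirely in the Sub-case~2 region when $\mathcal{V}(0)\le 0$.

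It remains to locate the minimizer of $h$. Each piece is unimodal: by the arguments in the proofs of Propositions~\ref{prop1}--\ref{prop4}, $g_1$ is convex and hence decreases then increases with interior minimizer $I_1$ when $I_1$ exists, and increases on $[0,\infty)$ otherwise, and the same holds for $g_0$ with $I_3$. Using $B(I)>0$, the theorem's thresholds are sign conditions at these minimizers: $L\ge -\Bar{V}/B(I_1)\iff\mathcal{V}(I_1)\ge 0\iff I_1\le I_2$, and $L\le -\Bar{V}/B(I_3)\iff\mathcal{V}(I_3)\le 0\iff I_3\ge I_2$. So, when both $I_1$ and $I_3$ exist: if $\mathcal{V}(I_1)\ge 0$, then $I_1\in[0,I_2]$ minimizes $g_1$ there while $g_0$ is increasing on $[I_2,\infty)$, so continuity of $h$ at $I_2$ gives $I^*=I_1$; symmetrically $I^*=I_3$ when $\mathcal{V}(I_3)\le 0$; and in the intermediate band $\mathcal{V}(I_1)\le 0\le\mathcal{V}(I_3)$ both pieces are monotone toward $I_2$, so $I^*=I_2$. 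The dividend then follows from the active sub-case --- $p_1^*=\mathcal{V}(I^*)=\Bar{V}+B(I^*)L$ in the Sub-case~1 regime (nonnegative exactly there) and $p_1^*=0$ in the Sub-case~2 regime, the two prescriptions agreeing at $I_2$ --- which, together with $p_0^*=0$, is part~1. Parts~2 and 3 are the same argument with each absent interior minimizer replaced by the boundary point $0$ and $I_2$ correspondingly degenerate, sliding the relevant thresholds onto $-\Bar{V}/B(0)$ and, when $I_1$ is absent too, deleting the first branch.

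The analytic content --- continuity of $h$ at $I_2$, unimodality of each piece, monotonicity of $\mathcal{V}$ --- is brief. I expect the main obstacle to be the bookkeeping around existence and nonnegativity of $I_1,I_2,I_3$: one must verify that the feasible sets of \eqref{eq6} and \eqref{eq7} genuinely tile $[0,\infty)$ and meet only at $I_2$, and that in every degenerate configuration the glued minimizer collapses to exactly the boundary value claimed, consistently with Propositions~\ref{prop2} and \ref{prop4}.
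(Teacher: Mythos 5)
Your plan is correct, and it takes a genuinely different route from the paper. The paper proves the theorem by taking the candidate optima delivered by Propositions \ref{prop2} and \ref{prop4} in each $L$-regime and comparing their objective values head-on, e.g.\ showing $B(I_2)F_0+I_2\ge B(I_1)F_1+I_1+\Bar{V}k$ via the mean value theorem applied to $B$ together with $B''>0$ (inequalities (\ref{eq8})--(\ref{eq10})). Your key observation --- that $g_1(I)-g_0(I)=B(I)(F_1-F_0)+\Bar{V}k=k\,\mathcal{V}(I)$, so the two objectives agree exactly on the common boundary $\{I_2\}$ of the two feasible sets and glue into a single continuous function $h$ on $[0,\infty)$ --- replaces all three MVT comparisons with one line: the constrained minimum of the ``wrong-side'' piece is attained at $I_2$, where it equals the other piece's value and hence cannot beat that piece's own minimizer. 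Combined with your ordering argument ($F_1>F_0$ forces $I_3<I_1$ and rules out the fourth existence configuration, and also guarantees at most one piece has an interior minimum, so $h$ is unimodal), the three branches of each part fall out of the sign of $\mathcal{V}$ at $I_1$ and $I_3$, which is exactly the theorem's $L$-thresholds. This is cleaner and more explanatory (it shows \emph{why} $I_2$ is the crossover), at the cost of the tiling/degeneracy bookkeeping you correctly flag: you must still check that the two feasible sets really partition $[0,\infty)$ meeting only at $I_2$, and handle the edge where $\Bar{V}=0$ (so $I_2$ does not exist, Sub-case~2 is entirely infeasible, and $h$ reduces to $g_1$, consistent with the first branch since every $L\ge 0=-\Bar{V}/B(I_1)$). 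None of this is a gap in the idea --- carry out that bookkeeping and you have a complete, and in my view tidier, proof.
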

\begin{proof}
    The results from Propositions \ref{prop1}-\ref{prop4} are combined by finding the optimal of the two sub-cases.
\begin{enumerate}
    \item When $I_1$ and $I_3$ exist : 
\end{enumerate}
If $L \geq \frac{\--\Bar{V}}{B(I_1)}$, the optimal solution is one of $(I= I_1,\: p_1 = \Bar{V} + B(I_1)L,\: p_0= 0)$ and $(I= I_2,\: p_1 = 0,\: p_0= 0)$ with objective function values $B(I_1)F_1 + I_1 + \Bar{V}k$ and $B(I_2)F_0 + I_2$ respectively. The former solution is optimal if: 
        \begin{align} \label{eq8}
            \begin{split}
                B(I_2)F_0 + I_2 &\geq B(I_1)F_1 + I_1 + \Bar{V}k \\
                \iff I_2 \-- I_1 &\geq F_1 (B(I_1) \-- B(I_2)) \\
                \iff \frac{\-- 1}{F_1} &\leq \frac{B(I_1)\--B(I_2)}{I_1\--I_2}
            \end{split}
        \end{align}
        Since $B(I_1) \geq \frac{\-- \Bar{V}}{L} = B(I_2)$, $I_1 \leq I_2$ as $B$ is strictly decreasing. Using the mean value theorem, $\exists I'$ s.t $I_1 \leq I' \leq I_2$ and $B'(I') = \frac{B(I_1)\--B(I_2)}{I_1\--I_2}$. As $B'' > 0$, $B'(I_1) \geq B'(I') \implies \frac{\-- 1}{F_1} \leq \frac{B(I_1)\--B(I_2)}{I_1\--I_2}$. Hence, (\ref{eq8}) is true.
        
       If $\frac{\--\Bar{V}}{B(I_1)}\geq L \geq  \frac{\--\Bar{V}}{B(I_3)}$ then both sub-cases give the same optimal solution $(I^*= I_2,\: p_1^* = 0,\: p_0^*= 0)$.
       
      If $\frac{\--\Bar{V}}{B(I_3)} \geq L \geq \frac{\--\Bar{V}}{B(0)}$ then the optimal solution is one of $(I= I_3,\: p_1 = 0,\: p_0= 0)$ and $(I= I_2,\: p_1 = 0,\: p_0 = 0)$ with objective function values $B(I_3)F_0 + I_3$ and $B(I_2)F_0 + I_2$ respectively. The former solution is optimal if: 
        \begin{align} \label{eq9}
            \begin{split}
                B(I_2)F_0 + I_2 &\geq B(I_3)F_0 + I_3 \\
                \iff \frac{\-- 1}{F_0} &\geq \frac{B(I_3)\--B(I_2)}{I_3\--I_2}
            \end{split}
        \end{align}
        Since $B(I_3) \leq \frac{\-- \Bar{V}}{L} = B(I_2)$, $I_3 \geq I_2$ as $B$ is strictly decreasing. Using the mean value theorem, $\exists I'$ s.t $I_2 \leq I' \leq I_3$ and $B'(I') = \frac{B(I_3)\--B(I_2)}{I_3\--I_2}$. As $B'' > 0$, $B'(I_3) \geq B'(I') \implies \frac{\-- 1}{F_0} \geq \frac{B(I_3)\--B(I_2)}{I_3\--I_2}$. Hence, (\ref{eq9}) is true.
        
     If $\frac{\--\Bar{V}}{B(0)} \geq L$ then $(I^*= I_3,\: p_1^* = 0,\: p_0^*= 0)$ is only solution from the two sub-cases as one of them was infeasible.
\begin{enumerate}
    \setcounter{enumi}{1}
    \item When $I_1$ exists but not $I_3$:
\end{enumerate}
If $L \geq \frac{\--\Bar{V}}{B(I_1)}$ then $(I= I_1,\: p_1 = \Bar{V} + B(I_1)L,\: p_0= 0)$ and $(I= I_2,\: p_1 = 0,\: p_0= 0)$ are compared and (\ref{eq8}) holds. If $\frac{\--\Bar{V}}{B(I_1)} \geq L \geq \frac{\--\Bar{V}}{B(0)}$ then both sub-cases give the same optimal solution $(I^*= I_2,\: p_1^* = 0,\: p_0^*= 0)$. If $\frac{\--\Bar{V}}{B(0)} \geq L$ then $(I^*= 0,\: p_1^* = 0,\: p_0^*= 0)$ is only solution from the two sub-cases as one of them was infeasible.
\begin{enumerate}
    \setcounter{enumi}{2}
    \item When both $I_1$ and $I_3$ don't exist:
\end{enumerate}
If $L \geq \frac{\--\Bar{V}}{B(0)}$ then $(I= 0,\: p_1 = \Bar{V} + B(0)L,\: p_0= 0)$ and $(I= I_2,\: p_1 = 0,\: p_0 = 0)$ with objective function values $B(0)F_1 + \Bar{V}k$ and $B(I_2)F_0 + I_2$ respectively are compared. The former solution is optimal if: 
        \begin{align} \label{eq10}
            \begin{split}
                B(I_2)F_0 + I_2 &\geq B(0)F_1 + \Bar{V}k \\
                \iff (B(I_2) \-- B(0))F_1 &\geq \--I_2 \\
                \iff \frac{B(I_2)\--B(0)}{I_2} &\geq \frac{\-- 1}{F_1}
            \end{split}
        \end{align}
Using the mean value theorem, $\exists I'$ s.t $I_2 \leq I' \leq 0$ and $B'(I') = \frac{B(I_2)\--B(0)}{I_2}$. And $B'(I') \geq \frac{\-- 1}{F_1}$ as $I_1$ doesn't exist. Hence, (\ref{eq10}) is true.

If $\frac{\--\Bar{V}}{B(0)} \geq L$ then $(I^*= 0,\: p_1^* = 0,\: p_0^*= 0)$ is only solution from the two sub-cases as one of them was infeasible. 
\end{proof}

\begin{theorem}\label{thm2}
    For the case when all users choose the high level of data-sharing, when $\Bar{V} \geq 0$ the optimal choices of the platform are as follows:
    \begin{enumerate}
    \item If $I_1$ exists then $(I^*= I_1,\:p_1^* = \Bar{V} + B(I_1)L,\:p_0^*=0)$ is optimal.
    \item If $I_1$ doesn't exist then $(I^*= 0,\:p_1^* = \Bar{V} + B(0)L,\:p_0^*=0)$
    \end{enumerate}
\end{theorem}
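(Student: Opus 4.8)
The plan is to derive Theorem~\ref{thm2} as an immediate consequence of Proposition~\ref{prop1}, once we observe that for $\Bar{V} \geq 0$ only Sub-case~1 of Case~1 can be feasible. Recall that the reduction of (\ref{eq4}) fixed $p_0^* = 0$ and then split into Sub-case~1 (where $\mathcal{V} \geq 0$ and $p_1^* = \mathcal{V} = \Bar{V} + B(I)L$, yielding problem (\ref{eq6})) and Sub-case~2 (where $\mathcal{V} \leq 0$ and $p_1^* = 0$, yielding problem (\ref{eq7})), with the overall optimum being the better of the two. So the first step is to show Sub-case~2 contributes nothing here: since $B(I) > 0$ for all $I$ and $L \geq 0$, we have $\mathcal{V} = \Bar{V} + B(I)L \geq \Bar{V} \geq 0$, so the added constraint $\mathcal{V} \leq 0$ of (\ref{eq7}) forces $\Bar{V} = 0$ and $B(I)L = 0$, which is impossible for $B(I) > 0$, $L > 0$ (and when $L = 0$ the two sub-cases coincide). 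This is precisely Proposition~\ref{prop3} for $\Bar{V} > 0$, extended to the boundary $\Bar{V} = 0$. Hence the Case~1 optimum equals the optimum of (\ref{eq6}).

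Next I would simply invoke Proposition~\ref{prop1}, which already solves (\ref{eq6}) for $\Bar{V} \geq 0$: the constraint $\Bar{V} + B(I)L \geq 0$ holds for every $I$, so $I^* = I_1 = B'^{-1}(\frac{-1}{F_1})$ when $I_1$ exists, and $I^* = 0$ otherwise. Substituting this $I^*$ into the already-fixed relations $p_0^* = 0$ and $p_1^* = \mathcal{V} = \Bar{V} + B(I^*)L$ gives exactly the two cases of the statement: $(I^* = I_1,\ p_1^* = \Bar{V} + B(I_1)L,\ p_0^* = 0)$ if $I_1$ exists, and $(I^* = 0,\ p_1^* = \Bar{V} + B(0)L,\ p_0^* = 0)$ if not.

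The remaining step is a quick feasibility check of these points in the original problem (\ref{eq4}): the incentive constraint $p_1^* - \mathcal{V} \geq p_0^* - \alpha\mathcal{V}$ becomes $0 \geq -\alpha\mathcal{V}$, true since $\mathcal{V} \geq 0$; the participation constraint $p_1^* - \mathcal{V} = 0 \geq 0$ holds; and $I^* \geq 0$, $p_0^* = 0 \geq 0$, $p_1^* = \Bar{V} + B(I^*)L \geq 0$ all hold because $\Bar{V} \geq 0$, $B(\cdot) > 0$, $L \geq 0$. I do not expect a genuine obstacle: the theorem is essentially a bookkeeping corollary of Proposition~\ref{prop1}. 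The only point deserving an explicit line is the vanishing of Sub-case~2, i.e.\ that the feasible region of (\ref{eq7}) is empty once $\Bar{V} \geq 0$, so that ``the minimum of the two sub-cases'' collapses to Sub-case~1 alone.
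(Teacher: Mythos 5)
Your proposal is correct and follows essentially the same route as the paper, whose proof simply cites Propositions \ref{prop1}--\ref{prop4} and notes that the stated solutions are the only feasible ones from the two sub-cases; you make explicit exactly the two ingredients the paper leaves implicit (Proposition \ref{prop1} resolving Sub-case~1, and the emptiness of Sub-case~2's feasible set via Proposition \ref{prop3}). Your additional care at the boundary $\Bar{V}=0$ and the final feasibility check in (\ref{eq4}) are welcome details but not a different argument.
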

\begin{proof}
     The results follow from Propositions \ref{prop1}-\ref{prop4}, and the above solutions are the only feasible solutions from the two sub-cases. 
\end{proof}

The final results of case 1 are summarized in table I.

\subsection{Case 2: $c_i^* = 0 \;\; \forall i = \{1, \cdots , k\}$}
In this case, platform must price the data dividends such that $p_0 \-- \alpha \mathcal{V} \geq p_1 \-- \mathcal{V}$. Here, $b=k$. The optimization problem becomes: 
\begin{align}\label{eq5}
\begin{split}
    \max & {\; U(k,k) \-- B(I)F \-- I \-- p_0 k \--S} \\
    s.t \; \; & p_0 \-- \alpha \mathcal{V} \geq p_1 \-- \mathcal{V}\\
    & p_0 \-- \alpha\mathcal{V} \geq 0 \\
    & I,\: p_0 \:, p_1 \geq 0. 
\end{split}
\end{align}
Similar to case 1, it can be observed that $p_1^* = 0$ is optimal and $p_0^* = \alpha \mathcal{V}$ when $\mathcal{V} \geq 0$ and $p_0^* = (\alpha\--1)\mathcal{V}$ when $\mathcal{V} \leq 0$. 

Let $I_4 = B'^{\--1}\left(\frac{\--1}{F+ \alpha Lk}\right)$, $I_5 = B'^{-1}\left(\frac{\--1}{F + (\alpha - 1)Lk}\right)$ and as before, $I_2 = B^{-1} (\frac{\--\Bar{V}}{L})$. 
\begin{theorem}
    For the case when all users choose the low level of data-sharing, when $\Bar{V} \leq 0$ the optimal choices of the platform are as follows:
    \begin{enumerate}
        \item If $I_4$ and $I_5$ exist, $(I^*= I_4,\: p_1^* = 0,\: p_0^*=\alpha(\Bar{V} + B(I_4)L))$ when $L \geq \frac{\--\Bar{V}}{B(I_4)}$, $(I^*= I_2,\: p_1^* = 0,\: p_0^*= 0)$ when $\frac{\--\Bar{V}}{B(I_4)}\geq L \geq  \frac{\--\Bar{V}}{B(I_5)}$ and $(I^*= I_5,\: p_1^* = 0,\: p_0^*= (\alpha\--1)(\Bar{V} + B(I_5)L)$ when $ \frac{\--\Bar{V}}{B(I_5)} \geq L$.
        \item If $I_4$ exists but not $I_5$, $(I^*= I_4,\: p_1^* = 0,\: p_0^*= \alpha(\Bar{V} + B(I_4)L))$ when $L \geq \frac{\--\Bar{V}}{B(I_4)}$, $(I^*= I_2,\: p_1^* = 0,\: p_0^*= 0)$ when $\frac{\--\Bar{V}}{B(I_4)} \geq L \geq \frac{\--\Bar{V}}{B(0)}$ and $(I^*= 0,\: p_1^* = 0,\: p_0^*= (\alpha\--1)(\Bar{V} + B(0)L) )$ when $\frac{\--\Bar{V}}{B(0)} \geq L$.
        \item If both $I_4$ and $I_5$ don't exist, $(I^* = 0, \: p_1^* = 0, \:p_0^* =\alpha(\Bar{V} + B(0)L))$ when $L \geq \frac{\--\Bar{V}}{B(0)}$ and $(I^*= 0,\: p_1^* = 0,\: p_0^*= (\alpha\--1)(\Bar{V} + B(0)L))$ when $\frac{\--\Bar{V}}{B(0)} \geq L$.
    \end{enumerate}
\end{theorem}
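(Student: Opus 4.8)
The plan is to mirror, step for step, the treatment of Case~1 in Propositions~\ref{prop1}--\ref{prop4} and Theorems~\ref{thm1}--\ref{thm2}. First I would eliminate the pricing variables from~(\ref{eq5}): the objective does not contain $p_1$ and every constraint is loosened by decreasing $p_1$, so $p_1^*=0$; the surviving constraints are then $p_0\ge\alpha\mathcal{V}$, $p_0\ge(\alpha-1)\mathcal{V}$ and $p_0\ge0$, whose smallest feasible value is $p_0^*=\alpha\mathcal{V}$ when $\mathcal{V}\ge0$ and $p_0^*=(\alpha-1)\mathcal{V}$ when $\mathcal{V}\le0$, using $\alpha\in(0,1)$. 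Substituting $\mathcal{V}=\Bar{V}+B(I)L$ reduces~(\ref{eq5}) to a one-variable problem in $I$, and dropping the constant $U(k,k)-S$ I would split it, as in the two sub-cases of Sect.~\ref{type1}--\ref{type2}, into sub-case~A (add $\mathcal{V}\ge0$, minimize $B(I)(F+\alpha Lk)+I+\alpha\Bar{V}k$) and sub-case~B (add $\mathcal{V}\le0$, minimize $B(I)(F+(\alpha-1)Lk)+I+(\alpha-1)\Bar{V}k$).

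Next I would solve each single-variable sub-case exactly as in Propositions~\ref{prop1}--\ref{prop4}. The unconstrained stationary points are $I_4=B'^{-1}(-1/(F+\alpha Lk))$ and $I_5=B'^{-1}(-1/(F+(\alpha-1)Lk))$, and the boundary $\mathcal{V}=0$ of the added constraint is $I_2$. Since $B'<0$ and $B''>0$ (so $B'$ is increasing and, as $B>0$ with $B\to0$, tends to $0$), each sub-case objective is strictly convex with a unique interior minimizer when that stationary point exists, and is strictly increasing on $[0,\infty)$ otherwise. Using that $B$ is decreasing to rewrite $\mathcal{V}\ge0$ as $I\le I_2$ and $\mathcal{V}\le0$ as $I\ge I_2$, I would read off the constrained optimizer of sub-case~A in each regime of $L$ --- namely $I_4$ when $L\ge-\Bar{V}/B(I_4)$, and $I_2$ or $0$ otherwise, with infeasibility exactly when $\mathcal{V}(0)<0$ (i.e.\ $L<-\Bar{V}/B(0)$) --- and symmetrically for sub-case~B in terms of $I_5$.

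The last step is to keep, in each regime of $L$, the better of the two sub-case optima. The crucial simplification is that at $I=I_2$ one has $B(I_2)L=-\Bar{V}$, so the additive constant $\alpha\Bar{V}k$ exactly cancels the $\alpha Lk$-part of $B(I_2)(F+\alpha Lk)$, and likewise $(\alpha-1)\Bar{V}k$ cancels the $(\alpha-1)Lk$-part; hence both sub-case objectives equal $B(I_2)F+I_2$ at the boundary. Consequently each pairwise comparison collapses to an inequality of an already-handled type: the comparison of sub-case~A at $I_4$ against sub-case~B at $I_2$ becomes $(B(I_4)-B(I_2))(F+\alpha Lk)\le I_2-I_4$, which is~(\ref{eq8}) with $F_1$ replaced by $F+\alpha Lk$; the remaining comparisons become~(\ref{eq9}) and~(\ref{eq10}) with $F_0$ or $F_1$ replaced by $F+(\alpha-1)Lk$ or $F+\alpha Lk$. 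Each is then closed by the same mean-value-theorem argument used there: choose $I'$ between the two points with $B'(I')$ equal to the relevant difference quotient, and conclude from $B''>0$ together with $B'(I_4)=-1/(F+\alpha Lk)$ (resp.\ $B'(I_5)=-1/(F+(\alpha-1)Lk)$, or non-existence of the stationary point) that the inequality holds.

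I expect the only real obstacle --- the one place this is not a literal transcription of Case~1 --- to be the presence of two distinct leading-coefficient shifts $F+\alpha Lk$ and $F+(\alpha-1)Lk$ paired with the additive constants $\alpha\Bar{V}k$ and $(\alpha-1)\Bar{V}k$: one must verify carefully that after substituting $B(I_2)L=-\Bar{V}$ these constants vanish cleanly so that every comparison really does reduce to~(\ref{eq8})--(\ref{eq10}). Everything else is bookkeeping: tracking which sub-case is infeasible (sub-case~A when $L<-\Bar{V}/B(0)$) and enumerating the three relevant existence patterns of $(I_4,I_5)$ --- noting that $F+\alpha Lk>F+(\alpha-1)Lk$ forces $I_5$ to exist whenever $I_4$ does, which is why the theorem has exactly three cases --- to assemble the itemized statement.
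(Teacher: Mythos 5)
Your proposal follows essentially the same route as the paper, whose own proof of this theorem is just the one-line deferral ``derive propositions analogous to Propositions \ref{prop1}--\ref{prop4} and argue as in Theorem \ref{thm1}''; you correctly fill in the substance, in particular the key observation that $B(I_2)L=-\Bar{V}$ makes both sub-case objectives collapse to $B(I_2)F+I_2$ at the boundary, so every pairwise comparison reduces to (\ref{eq8})--(\ref{eq10}) with $F_1,F_0$ replaced by $F+\alpha Lk,\,F+(\alpha-1)Lk$ and closes by the same mean-value-theorem argument. One small correction to your final parenthetical: since $F+\alpha Lk>F+(\alpha-1)Lk$, the target value $-1/(F+\alpha Lk)$ is the one \emph{closer} to zero, so the implication runs the other way --- $I_5$ existing forces $I_4$ to exist (and $I_5$ automatically fails to exist when $F+(\alpha-1)Lk\le 0$). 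It is the combination ``$I_5$ exists but not $I_4$'' that is vacuous and hence omitted from the theorem; your stated direction would instead rule out ``$I_4$ exists but not $I_5$'', which is the theorem's own second case. This slip does not affect the main argument, which treats each of the three listed cases directly.
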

\begin{proof}
    Propositions similar to \ref{prop1}-\ref{prop4} can be derived for case 2. The proof is then similar to Theorem \ref{thm1}. 
\end{proof}

\begin{theorem}
    For the case when all users choose the low level of data-sharing, when $\Bar{V} \geq 0$ the optimal choices of the platform are as follows:
    \begin{enumerate}
    \item If $I_4$ exists then $(I^*= I_4,\:p_1^* = 0,\:p_0^*=\alpha(\Bar{V} + B(I_4)L))$ is optimal.
    \item If $I_4$ doesn't exist then $(I^*= 0,\:p_1^* = 0,\:p_0^*=\alpha(\Bar{V} + B(0)L))$
    \end{enumerate}
\end{theorem}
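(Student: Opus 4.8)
The plan is to mimic the treatment of Case~1 in Propositions~\ref{prop1}--\ref{prop4} and Theorem~\ref{thm2}: reduce (\ref{eq5}) to a one-variable minimization over $I$ and then run a convexity/first-order argument. The starting observation is that when $\Bar{V} \geq 0$ this is genuinely easier than Case~2 with $\Bar{V} \leq 0$, because (using $L \geq 0$ and $B(I) > 0$) we have $\mathcal{V} = \Bar{V} + B(I)L \geq \Bar{V} \geq 0$ for every $I \geq 0$. Hence only the sub-case $\mathcal{V} \geq 0$ is ever relevant, and the choices noted above for Case~2, namely $p_1^* = 0$ and $p_0^* = \alpha\mathcal{V} = \alpha(\Bar{V} + B(I)L)$, apply throughout. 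I would first verify this pair is feasible for (\ref{eq5}): $p_1^* = 0 \geq 0$; $p_0^* = \alpha(\Bar{V}+B(I)L) \geq 0$ since $\alpha \in (0,1)$ and $\Bar{V}, B(I)L \geq 0$; $p_0^* - \alpha\mathcal{V} = 0 \geq 0$; and $p_0^* - \alpha\mathcal{V} = 0 \geq -\mathcal{V} = p_1^* - \mathcal{V}$.

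Next I would substitute $p_1 = 0$ and $p_0 = \alpha(\Bar{V}+B(I)L)$ into the objective of (\ref{eq5}) and drop the $I$-independent constant $U(k,k) - S - \alpha k\Bar{V}$, so that maximizing the platform's utility is equivalent to $\min_{I \geq 0}\; B(I)(F + \alpha Lk) + I$. This is the Case~2 analogue of (\ref{eq6}) but with \emph{no} remaining inequality constraint on $I$, because $\Bar{V} + B(I)L \geq 0$ now holds automatically; the role played there by $I_1$ is played here by $I_4 = B'^{-1}(-1/(F+\alpha Lk))$.

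The last step is the argument of Proposition~\ref{prop1}. The map $I \mapsto B(I)(F+\alpha Lk) + I$ is strictly convex since $B'' > 0$, with derivative $B'(I)(F+\alpha Lk)+1$. If $I_4$ exists it is the unique unconstrained stationary point and, as in Proposition~\ref{prop1}, also satisfies $I_4 \geq 0$, so it solves the constrained problem, giving $(I^*, p_1^*, p_0^*) = (I_4,\,0,\,\alpha(\Bar{V}+B(I_4)L))$, i.e.\ statement~1. If $I_4$ does not exist then $B'(I)(F+\alpha Lk)+1 > 0$ for all $I \geq 0$, so the objective is strictly increasing on $[0,\infty)$ and minimized at $I = 0$, giving $(I^*, p_1^*, p_0^*) = (0,\,0,\,\alpha(\Bar{V}+B(0)L))$, i.e.\ statement~2.

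I do not expect a genuine obstacle here: this is the mildest of the four theorems, as the participation/feasibility constraint never binds and only one pricing sub-case survives. The only points needing care are the feasibility check for the prescribed prices in (\ref{eq5}) and retaining the convention used in Propositions~\ref{prop1}--\ref{prop4} that ``$I_4$ exists'' means $-1/(F+\alpha Lk)$ lies in the range of $B'$ over $[0,\infty)$, so that existence automatically yields $I_4 \geq 0$.
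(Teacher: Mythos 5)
Your proof is correct and follows essentially the same route as the paper, which simply defers to the Case-2 analogues of Propositions~\ref{prop1}--\ref{prop4} and the argument of Theorem~\ref{thm2}: since $\Bar V\ge 0$ forces $\mathcal V\ge 0$ for all $I$, only the sub-case $p_1^*=0$, $p_0^*=\alpha\mathcal V$ is feasible, and the reduced problem $\min_{I\ge 0} B(I)(F+\alpha Lk)+I$ is solved exactly as in Proposition~\ref{prop1} with $I_1$ replaced by $I_4$. Your write-up just makes explicit the feasibility check and the convexity argument that the paper leaves implicit.
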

\begin{proof}
    The proof is similar to Theorem \ref{thm2}.
\end{proof}

The final results of case 2 are summarized in table I.

\begin{table*}[t]
\caption{}
\begin{center}
\begin{tabular}{|c|c|c|}
\hline
\multicolumn{3}{|c|}{Final results of Case 1 when $\Bar{V}\leq 0$ : Here, $p_0^* = 0$ always.} \\
\hline
$I_1$ and $I_3$ exist & $I_1$ exists but not $I_3$ &  $I_1$ and $I_3$ don't exist \\
\hline
{$\!\begin{aligned}
    \text{If } L \geq \frac{\--\Bar{V}}{B(I_1)} : \; & I^* = I_1 , \\& \;p_1^* = \Bar{V} + B(I_1) L. \\
    \text{If } \frac{\--\Bar{V}}{B(I_1)} \geq L \geq \frac{\--\Bar{V}}{B(I_3)} : & \; I^* = I_2 , \\& \; p_1^* = 0. \\
    \text{If } \frac{\--\Bar{V}}{B(I_3)} \geq L : & \; I^* = I_3 , \\& \; p_1^* = 0. \\
\end{aligned}$} & {$\!\begin{aligned}
    \text{If } L \geq \frac{\--\Bar{V}}{B(I_1)} : \; & I^* = I_1 , \\& \;p_1^* = \Bar{V} + B(I_1) L. \\
    \text{If } \frac{\--\Bar{V}}{B(I_1)} \geq L \geq \frac{\--\Bar{V}}{B(0)} : & \; I^* = I_2 , \\& \; p_1^* = 0. \\
    \text{If } \frac{\--\Bar{V}}{B(0)} \geq L : & \; I^* = 0 , \\& \; p_1^* = 0. \\
\end{aligned}$}& {$\!\begin{aligned}
    \text{If } L \geq \frac{\--\Bar{V}}{B(0)} : \; & I^* = 0 , \\& \;p_1^* = \Bar{V} + B(0) L. \\
    \text{If } \frac{\--\Bar{V}}{B(0)} \geq L : & \; I^* = 0 , \\& \; p_1^* = 0. \\
\end{aligned}$} \\
\hline
\hline
\multicolumn{3}{|c|}{Final results of Case 1 when $\Bar{V} \geq 0$ : Here, $p_0^* = 0$ always.}\\
\hline
\multicolumn{2}{|c|}{$I_1$ exists} & \multicolumn{1}{|c|}{$I_1$ doesn't exist} \\
\hline
\multicolumn{2}{|c|}{$I^* = I_1$} & $I^* = 0$\\
\multicolumn{2}{|c|}{$p_1^* = \Bar{V} + B(I_1) L$} & $p_1^* = \Bar{V} + B(0) L$ \\
\hline
\hline
\multicolumn{3}{|c|}{Final results of Case 2 when $\Bar{V}\leq 0$ : Here, $p_1^* = 0$ always.}\\
\hline
$I_4$ and $I_5$ exist & $I_4$ exists but not $I_5$ &  $I_4$ and $I_5$ don't exist \\
\hline
{$\!\begin{aligned}
    \text{If } L \geq &\frac{\--\Bar{V}}{B(I_4)} : \;  I^* = I_4 , \\& \;p_0^* = \alpha(\Bar{V} + B(I_4) L). \\
    \text{If } \frac{\--\Bar{V}}{B(I_4)} \geq L \geq &\frac{\--\Bar{V}}{B(I_5)} : \; I^* = I_2 , \\& \; p_0^* = 0. \\
    \text{If } \frac{\--\Bar{V}}{B(I_5)}& \geq L : \; I^* = I_5 , \\& \; p_0^* = (\alpha \-- 1)(\Bar{V} + B(I_5) L). \\
\end{aligned}$} & {$\!\begin{aligned}
    \text{If } L \geq &\frac{\--\Bar{V}}{B(I_4)} : \; I^* = I_4 , \\& \;p_0^* = \alpha(\Bar{V} + B(I_4) L). \\
    \text{If } \frac{\--\Bar{V}}{B(I_4)} \geq L \geq &\frac{\--\Bar{V}}{B(0)} : \; I^* = I_2 , \\& \; p_0^* = 0. \\
    \text{If } \frac{\--\Bar{V}}{B(0)} &\geq L : \; I^* = 0 , \\& \; p_0^* = (\alpha \--1)(\Bar{V} + B(0) L). \\
\end{aligned}$}& {$\!\begin{aligned}
    \text{If } L \geq & \frac{\--\Bar{V}}{B(0)} : \; I^* = 0 , \\& \;p_0^* = \alpha(\Bar{V} + B(0) L). \\
    \text{If } \frac{\--\Bar{V}}{B(0)} & \geq L : \; I^* = 0 , \\& \; p_0^* = (\alpha \--1)(\Bar{V} + B(0) L). \\
\end{aligned}$} \\
\hline
\hline
\multicolumn{3}{|c|}{Final results of Case 2 when $\Bar{V} \geq 0$ : Here, $p_1^* = 0$ always.}\\
\hline
\multicolumn{2}{|c|}{$I_4$ exists} & $I_4$ doesn't exist \\
\hline
 \multicolumn{2}{|c|}{$I^* = I_4$} & $I^* = 0$ \\
\multicolumn{2}{|c|}{$p_0^* = \alpha(\Bar{V} + B(I_4) L)$} & $p_0^* = \alpha(\Bar{V} + B(0) L)$ \\
\hline
\end{tabular}
\end{center}
\end{table*}
\section{Insights from optimal choices}\label{sect4}

The platform uses data dividends to direct users to choose a particular level of data sharing. In general, the platform pays a higher data dividend for the high level of data sharing to compensate for users' larger risks. But it also earns a larger revenue when it collects more data. This trade-off between revenue and data dividend decides the optimal level of sharing.

It is optimal for the platform to pay for only one level of data sharing which maximizes its profit. However, sometimes it may prefer to not pay at all. The cases when this happens are outlined below.
\subsection{Can $p_1^* = 0$ while encouraging high level of data sharing?}
Firstly, $p_1^* \neq 0$ when $V > W$. That is, the platform always pays a data dividend whenever users value their data more than the platform's free services. When $V \leq W$, the users value the platform's free services more than their data. In this case, if the platform ensures that a data breach does not occur then it does not pay any data dividend. So if $B(I) \rightarrow 0 \; \forall I$, meaning there is a low possibility of a data breach, or $L \rightarrow 0$ so that users face a small loss from a data breach, then no data dividend is paid. Interestingly, even if $F \rightarrow \infty$, which is if the platform has heavy losses from a data breach, then they do not pay users with data dividends and instead invest heavily in data protection ($I \rightarrow \infty$). In general, the platform does not pay a data dividend when $ W \-- V - B(I^*)L \geq 0$, that is, the net users' utility before pay is positive. Currently, major technology companies do not pay data dividends. Our model suggests that these companies believe they are in the region described by the above discussion.

\subsection{Can $p_0^* = 0$ while encouraging low level of data sharing?}
Like for $p_1$, if $V > W$ then $p_0^* \neq 0$. But when $V \geq W$, $p_0^* = 0$ only if $W \-- V - B(I^*)L = 0$, that is, the net users' utility is zero. This is strange because the platform pays a data dividend when $W \-- V - B(I^*)L > 0$, that is, the net users' utility before pay is positive. The reason is that in this case the users naturally prefer a high level of data sharing. So the platform has to pay a data dividend to incentivize the low level of data sharing. One could argue that the platform should rather remove the option of a high level of data sharing, and it probably would.
\\ \\
Finally, we also analyze the trends in the investment for users' data protection. If $B(I) \rightarrow 0 \; \forall I$, meaning the possibility of a data breach is small, or if $F, L \rightarrow 0$ so that no party loses from a data breach, then the platform does not invest in data protection. If $B'(I) \rightarrow 0 \; \forall I$, that is when the likelihood of a data breach is not very sensitive to investment, then the platform chooses to not invest at all.
\begin{figure}[h!]
\includegraphics[scale=0.525]{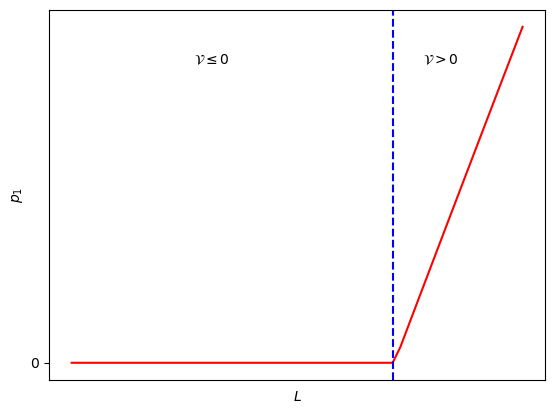}
\centering
\caption{Here, users choose the high level of data sharing, $I_1$ exists, and $\Bar{V}\leq 0$. Data dividend $p_1$ is plotted for increasing values of $L$, that is users' loss from a data breach. The platform pays a data dividend beyond a threshold when $\mathcal{V}>0$, that is users face a total loss.}
\end{figure}
\begin{figure}
    \centering
\includegraphics[scale=0.525]{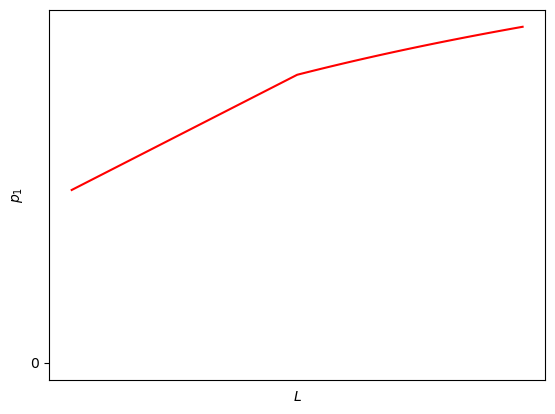}
\caption{Here, users choose the high level of data sharing and $\Bar{V}\geq 0$. Data dividend $p_1$ is plotted for increasing values of $L$, that is users' loss from a data breach. The platform always pays a data dividend of at least $\Bar{V}$.}
\end{figure}
\begin{figure}[h!]
\includegraphics[scale=0.525]{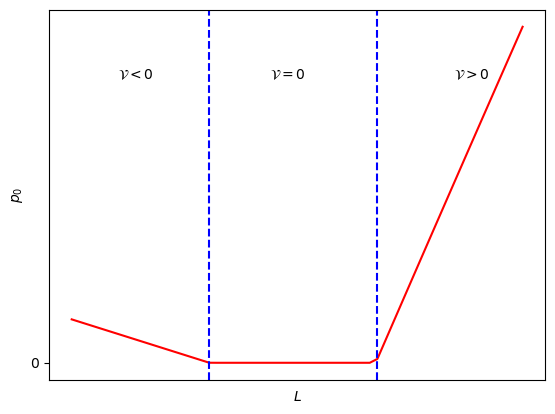}
\centering
\caption{Here, users choose the low level of data sharing, $I_4$ exists, and $\Bar{V}\leq 0$. Data dividend $p_0$ is plotted for increasing values of $L$, that is users' loss from a data breach. Interestingly, the platform doesn't pay a data dividend only when $\mathcal{V}=0$, that is users neither have a net loss nor gain.}
\end{figure}
\section{Conclusion}
We studied the dynamics between an online platform and its users when the latter are paid with data dividends. We demonstrated how an online platform could use data dividends to incentivize its users to share more data and take more risks, while also giving them an option to share as much data as they want. If the platform wants to gather more data, it pays a data dividend only when the users perceive their risks and losses to be higher than their benefits. Further, when users feel that they do not lose much from a data breach, the platform prefers to invest in data protection rather than pay users for risking a breach. On the other hand, when users perceive a high loss from a breach, the platform has to both invest and pay. It would be interesting to analyze multiple (more than 2) levels of data sharing, each with a different data dividend, and where heterogeneous users are incentivized to share more or fewer data as is optimal for the platform. We leave this for future work.

It is debatable whether data dividends could treat data as a commodity and exploit the poor and vulnerable. But if data dividends are introduced, our model shows that platforms cannot always use it to shirk responsibility for data protection. When users feel considerable harm from a data breach, platforms would both invest in data protection and pay higher data dividends. We believe that our insights could help a policymaker understand data dividends better.





\end{document}